\documentclass[conference]{IEEEtran}
\usepackage{graphicx}
\usepackage{dcolumn}
\usepackage{bm}
\usepackage{hyperref}
\usepackage{physics}
\usepackage{booktabs}
\usepackage{amsthm}
\usepackage{array} % Required for fixed-width columns and better table alignment
\usepackage{amsmath}  % Provides better math environments
\usepackage{amsfonts} % For math symbols
\usepackage{comment}
\usepackage{subcaption}
\usepackage{algorithm}
\usepackage{algpseudocode}
\usepackage{tabularx}
\usepackage{booktabs}
\usepackage{amsmath}

\usepackage{graphicx}
\usepackage{tikz}
\usetikzlibrary{quantikz2,shapes.geometric,shapes.arrows, arrows.meta, positioning, calc, backgrounds} % Works with standard quantikz as well

\theoremstyle{remark} 

\newtheorem{remark}{Remark} 

\newcommand{\Algo}{AQ-Stacker}
\newcommand{\AOD}{Algebraic Operator Decomposition}
\newcommand{\decot}{Hardware-dependent Coherence Threshold}
\newtheorem{corollary}{Corollary}

\begin{document}

\title{Algebraic Operator Decomposition: A Partitioned Architecture for Noise-Resilient Quantum Computing}
%\author{Wladimir Silva}
%\affiliation{Department of Electrical and Computer Engineering, North Carolina State University}
\author{
    \IEEEauthorblockN{Wladimir Silva}
    \IEEEauthorblockA{
        \textit{Department of Electrical and Computer Engineering} \\
        \textit{North Carolina State University}\\
        Raleigh, USA \\
        wsilva@ncsu.edu
    }
}

\date{\today}
\maketitle
\begin{abstract}
%We present a potentially useful operator-decomposition architecture that converts circuit-depth-induced physical error into a controllable combination of local quantum errors plus classical sampling overhead using {\AOD}
%with bound $|\langle T\rangle_\Lambda-\langle T\rangle_{\rm ideal}| \leq \sum_k |c_k|\,\text{local\_error}_k$
%for a global linear operator $T \in \mathcal{L}(V, W)$ decomposed into $K$ independent sub-channels 
%where $\mathcal{L}(\mathcal{V}, \mathcal{W})$ is the space of all linear operators mapping $\mathcal{V} \to \mathcal{W}$.
We present an operator-decomposition architecture that mathematically maps a global operator into independently executable local operators, reducing the maximum quantum circuit depth at the cost of classical reconstruction and sampling overhead.
By framing complex Quantum Circuits around an operator in a vector space that can be algebraically pre-decomposed,
AOD complements quantum error correction and error-mitigation approaches by performing algebraic decomposition before quantum execution.
%we introduce a third path in quantum error engineering: unlike Quantum Error Correction (QEC) we are not fixing errors physically,
%nor statistically post-execution (like traditional error mitigation), but algebraically trying to avoid them pre-execution.

Our approach leans in the computer science definition of a Monoid: a design pattern and mathematical concept consisting of a data type, a combining function that is associative, and a safe identity (neutral) element that does not change other values when combined.
Simulation wise we define a MapReduce programming model where
the \textit{addition ($+$)} is the reducer, thus leveraging a naturally stable commutative monoid which carries zero "negative-probability tax" or phase conflicts.

Furthermore, we define a Vector Space of Linear Operators over Additive Abelian Groups that benefit from this paradigm, including:
Inner Products, Series expansions, Traces and Convolutions.
Finally, we present the mathematical foundations and simulation results for this paradigm.
\end{abstract}

\maketitle

\section{Introduction}
Algebraic Operator Decomposition (AOD) is a technique used to break a complex linear operator (or matrix) into a sum, product, or combination of simpler, algebraically structured components. 
Its goal is to simplify solving systems of equations, finding operator functions, or understanding geometric transformations.
In the context of Computer Science, specifically distributed computing, AOD can be thought as a form of MapReduce, a programming model designed to process massive datasets in parallel across clusters \cite{dean2004mapreduce}.

While AOD-MapReduce is designed to speed up complex processes by divide-and-conquer,
in NISQ Quantum Computing, it can be used as a proactive method to mitigate noise and manage decoherence,
which limit circuit depth, reduce calculation accuracy, and cause quantum states to decay \cite{Preskill2018nisq}.
%While "MapReduce" is a classical distributed-computing optimization method for data scaling \cite{dean2004mapreduce},
%our framework acts as a spatial error-mitigation strategy to bypass the physical limitations of the NISQ era \cite{Preskill2018nisq}.
Our work illustrates how AOD-MapReduce can be used not to correct noise (QEC) or manage decoherence (via Circuit Cutting-Knitting), 
but to avoid both at the same time, thus acting as a spatial error-mitigation strategy to bypass the limitations of the NISQ era.
The main contributions of this manuscript include:
\begin{itemize}
    \item \textbf{Algebraic Operator Decomposition (AOD) Paradigm:} Proposes a novel proactive pre-execution error-avoidance framework that acts as a spatial error-mitigation strategy by algebraically fracturing complex quantum circuits prior to execution.
    \item \textbf{Quantum MapReduce Programming Model:} Formulates a hybrid distributed computing scheme structured around a naturally stable commutative monoid under classical addition, preventing phase conflicts and negative-probability overhead.
    %\item \textbf{Cross-Domain Applicability:} Demonstrates robust functional extension of the AOD pipeline across distinct non-trivial computational domains: multi-qubit inner products (Hadamard Tests), discrete spatial differential operators ($\nabla^2$), and non-linear activation loops ($\tanh$, GELU).
	\item \textbf{Cross-Domain Applicability:} We demonstrate the AOD pipeline across distinct computational domains: multi-qubit inner products (Hadamard Test), discrete spatial differential operators ($\nabla^2$), and non-linear activation loops ($\tanh$, GELU).
    %\item \textbf{Rigorous Hardware-Noise Tracking:} Formalizes the theoretical limit of the physical ``Decoherence Threshold'' ($D \approx 500$) and mathematically proves both Spatial Error Isolation (zero downstream gate error propagation) and the Critical Efficiency Threshold over noisy monolithic registers.
	\item \textbf{Hardware-Dependent Noise and Partition Analysis:} We formulate a depth-dependent depolarizing noise model and analyze the tradeoff between circuit depth, local noise, partition size, and sampling overhead.
\end{itemize}

We begin by quantifying a custom \textit{Depolarizing Noise Model} based on circuit depth and its correlated \textit{{\decot}}.

\subsection{Cumulative Gate Noise and the {\decot}}
We first evaluate AOD under a simplified depth-dependent depolarizing model. This model is intended to isolate the relationship between circuit depth $(D)$ and signal attenuation using cumulative gate noise $(\lambda)$ and the average readout error $(\epsilon_r)$; it does not model coherent errors, leakage, crosstalk, non-Markovian noise, or gate-dependent correlated errors. 
The analysis therefore establishes behavior under the specified independent depolarizing model rather than a general physical noise model.
Let $\epsilon_{1Q}$, $\epsilon_{2Q}$ be the average 1-qubit, 2-qubit error rates of an arbitrary QPU, 
then the cumulative gate noise ($\lambda$) is defined as follows:

$$\lambda = 1 - (1 - \epsilon_{1Q})^{D} \times (1 - \epsilon_{2Q})^{D}$$ 

Where $(1 - \epsilon_{1Q})^{D} \times (1 - \epsilon_{2Q})^{D}$ is the probability of a state surviving all gates intact. 
For a complex Quantum Circuit with large depth and accounting for a hardware-dependent depth/noise threshold $\approx 500$ under the selected parameters
The process works by:
\begin{itemize}
	\item Splitting the problem into $K$ smaller partial sub-vectors.
	\item Executing the shallow circuits in a Multi-QPU setup.
	\item Classically reducing the outcomes to reconstruct the original result.
\end{itemize}

\begin{remark}[]
The noise parameters $\epsilon_{1Q}$, $\epsilon_{2Q}$, and $\epsilon_r$ are fetched from the \textit{ibm\_kingston} QPU via REST \cite{ibm_quantum_rest_api}.
The {\decot} is a \textbf{modeling estimate} at which the information dissipates and introduced in our prior work on the {\Algo} algorithm \cite{silva2024aqstacker} as: $D_{max} \approx T_2/T_{gate}$ where $T_2 \le 2T_1$ with $T_2 \approx 250Kns$ (Coherence Time), and $T_{gate} \approx 500ns$ (2-Qubit Gate Time) for the average performance metrics of the Latest IBM Heron r3 family of QPUs \cite{ibmquantum2026}.
\end{remark}

\subsection{The Architectural Lineage: From Operators to Vector Spaces}

We consider all processes that belong to a Vector Space (or in functional analysis, a Banach Space or Hilbert Space \cite{kreyszig1978introductory}), where the operations themselves act as Linear Operators (or Linear Functionals)
belonging to an Additive Abelian Group under addition (See Table \ref{tab:processes_ paradigm}).
Because our method scales the components by classical coefficients, we can put them under a Vector Space of Functions/Operators which satisfies two fundamental linear properties:
%\subsubsection{Vector Space of Linear Operators}
%This algebraic space satisfies two fundamental linear properties:
\begin{itemize}
	\item Cauchy's Functional Equation~\cite{Cauchy1821}: For any linear operator $T$, the rule is strictly: $T(x + y) = T(x) + T(y)$
		\begin{enumerate}
			\item \textbf{Map:} Split the global input $X$ into fragments ($x_1 + x_2 + \dots$), and evaluate $T(x_i)$ independently.
			\item \textbf{Reduce:} Sum the results classically ($\sum T(x_i)$).
		\end{enumerate}
	\item The Riesz Representation Theorem: In a Hilbert space, every continuous linear functional $T(x)$ can be written as an inner product with a specific vector $y$: $T(x) = \langle y, x \rangle$~\cite{delrio2017directprooffriesz}.
	%Because all operations are framed as linear functionals, they can all be mapped back to an inner product.
	Many of the linear observables considered in this work admit inner-product representations, which makes Hadamard-test-based primitives a convenient execution mechanism.
\end{itemize}

\begin{table}[htbp]
\centering
\caption{Process Vector Space under Abelian Addition. It describes algebraic decomposability; it does not by itself guarantee an efficient quantum implementation or quantum advantage.}
\label{tab:processes_ paradigm}
\begin{tabularx}{\linewidth}{>{\raggedright\arraybackslash}p{2.3cm}X X}
\toprule
\textbf{Process} & \textbf{How it splits Linearly (The ``Map'')} & \textbf{How it recombines (The ``Reduce'')} \\
\midrule
Inner Products & $(\phi, x_1 + x_2)$ & $\langle \phi, x_1 \rangle + \langle \phi, x_2 \rangle$ \\
\addlinespace
Series Expansions & Polynomial chunks: $\sum c_k x^k$ & Summing individual scalar terms \\
\addlinespace
Traces ($\text{Tr}(A + B)$) & Splitting matrices into sub-blocks & $\text{Tr}(A) + \text{Tr}(B)$ \\
\addlinespace
Convolutions & Splitting an image into sliding patches & Summing localized patch overlaps \\
\bottomrule
\end{tabularx}
\end{table}

\begin{table*}[htbp]
\centering
%\caption{HW Depth (plus CNOT count) and partitioning trade-offs for $N=256$ features of the Inner Product (using the Hadamard Test) from ibm\_kingston with error rates: $\epsilon_{1Q}=0.0002$, $\epsilon_{2Q}=0.0019$, $\epsilon_{r}=0.0076$}
\caption{\textbf{Partition-depth-noise tradeoff under the selected hardware model} for $N=256$ features of the Inner Product (using the Hadamard Test) for ibm\_kingston with error rates: $\epsilon_{1Q}=0.0002$, $\epsilon_{2Q}=0.0019$, $\epsilon_{r}=0.0076$.}
\label{tab:circuit_complexity_mapreduce}
\begin{tabular}{ccccccc}
\hline
\textbf{Qubits ($n$)} & \textbf{Partitions ($k$)} & \textbf{Qu. / Part. ($n+1$)} & \textbf{Part. Qu.} & \textbf{HW ($D$)} & \textbf{CNOT-c} & \textbf{Noise ($\lambda$)} \\ \hline
1                     & 128                       & 2                            & 256                 & 10                      & 2                   & 0.0208 \\
2                     & 64                        & 3                            & 192                 & 61                      & 20                  & 0.1203 \\
3                     & 32                        & 4                            & 128                 & 180                     & 70                  & 0.3150 \\
4                     & 16                        & 5                            & 80                  & 460                     & 185                 & 0.6197 \\
5                     & 8                         & 6                            & 48                  & 1029                    & 422                 & 0.8850 \\
6                     & 4                         & 7                            & 28                  & 2227                    & 908                 & 0.9907 \\
7                     & 2                         & 8                            & 16                  & 4677                    & 1913                & 0.9999 \\
8                     & 1                         & 9                            & 9                   & 9472                    & 3893                & 1.0000 \\ \hline
\end{tabular}
\end{table*}

%---- comment
\begin{comment}
\begin{table*}[htbp]
\centering
\caption{Circuit Depth (plus CNOT count) and Partitioning Trade-offs for a $N=256$ Feature Space.}
\label{tab:circuit_complexity_mapreduce}
\begin{tabular}{cccccc}
\hline
\textbf{Data Qubits ($n$)} & \textbf{Partitions ($k$)} & \textbf{Asymptotic} & \textbf{HW} & \textbf{CNOT-c} \\ \hline
1                     & 128                       & 5                         & 10                      & 2                   \\
2                     & 64                        & 12                        & 61                      & 20                  \\
3                     & 32                        & 26                        & 180                     & 70                  \\
4                     & 16                        & 56                        & 460                     & 185                 \\
5                     & 8                         & 118                       & 1029                    & 422                 \\
6                     & 4                         & 244                       & 2227                    & 908                 \\
7                     & 2                         & 498                       & 4677                    & 1913                \\
8                     & 1                         & 1008                      & 9472                    & 3893                \\ \hline
\end{tabular}
\end{table*}
\end{comment}
%---- end comment

\section{Existing Methods}
In the rapidly evolving field of quantum information science, new paradigms include Quantum Circuit Knitting \cite{Piveteau_2024}, Quasiprobability Decomposition (QPD) \cite{Piveteau_2022}, and Randomized Quantum Linear Algebra \cite{Wang_2024} among others.
Within these paradigms, some of its processes include:
\begin{itemize}
	\item \textbf{High-Dimensional Tensor Contractions (Wire \& Gate Cutting):} When moving from flat vectors to multidimensional tensors, full quantum state tracking triggers an exponential depth explosion \cite{Matthews_2018}.
	\item \textbf{Linear Combinations of Unitaries (LCUs) \& Block Encodings:} block-encoding a matrix $A$ inside a larger unitary usually requires massive ancilla arrays and deep control logic \cite{Childs2012Hamiltonian}.
	\item \textbf{Matrix Exponentiation \& Spectral Properties (Pauli Decompositions)}: Common in quantum chemistry and machine learning, to evaluate functions of arbitrary dense matrices.
	They generally require exponential time because the number of Pauli strings scales as $4^n$ for $n$ qubits, depending on whether the target operator or matrix is dense, sparse, or has a specific structural property \cite{Georges_2025}.
\end{itemize}
These processes tend to be reactive and monolithic in nature and can run in exponential times or require complex processes like state tomography \cite{Gily_n_2019, Georges_2025}.
\textit{Our paradigm, on the other hand, aims to be a proactive algebraic divide and conquer strategy against both: noise and decoherence.}

\section{Mathematical Background}
Let $\mathcal{V}$ and $\mathcal{W}$ be vector spaces over a field $\mathbb{F}$ (such as $\mathbb{R}$ or $\mathbb{C}$), and let $\mathcal{L}(\mathcal{V}, \mathcal{W})$ be the space of all linear operators mapping $\mathcal{V} \to \mathcal{W}$.

\subsection{{\AOD} Framework}
Any global operator $T \in \mathcal{L}(\mathcal{V}, \mathcal{W})$ that can be classically factored into a linear combination of localized, bounded basis operations $\tau_k$ can be evaluated via a distributed sub-coherence quantum channel architecture:
$$T = \sum_{k=1}^K c_k \tau_k \quad \text{where } c_k \in \mathbb{F}$$ 

\begin{enumerate}
	\item \textbf{Map:} The host computer decomposes the global mathematical expression into $K$ independent fragments and extracts the structural coefficients $c_k$.
	$$T\rightarrow\{c_k,\tau_k\}_{k=1}^K$$
	\item \textbf{Compute:} Each isolated basis operation $\tau_k$ is compiled into a dedicated, shallow quantum primitive whose circuit depth $D(\tau_k)$ strictly respects the physical hardware coherence threshold: $\max_{k} D(\tau_k) \ll D_{\max}$.
	$$\tau_k\rightarrow\widehat{\langle\tau_k\rangle}$$
	\item \textbf{Reduce:} The scalar measurement outcomes $\langle \tau_k \rangle$ are collected and synthesized via a commutative monoid (classical addition) to reconstruct the global operator property:  
	%$\langle T \rangle = \sum_{k=1}^K c_k \langle \tau_k \rangle$
	$$\widehat{\langle T\rangle} = \sum_kc_k\widehat{\langle\tau_k\rangle}$$
\end{enumerate}

\begin{remark}[]
By shifting to general vector space operators, we formalize two explicit structural hardware protections:
	
	\textbf{1) Depth Reduction Under Partitioning:} In a global monolithic circuit $T$, the global depth scales with system size~\cite{Ithier_2005}, $D_{\text{mono}} = f(N)$, inevitably breaching the {\decot} ($D_{\text{mono}} > D_{\max}$).
	In our framework, because the algebra is fractured prior to runtime, the quantum execution time is bounded entirely by the deepest sub-component:
$D_{\text{actual}} = \max_{k} D(\tau_k) < D_{\max}$.
	The circuit completes execution and reads out its state before physical phase-decoherence or $T_2$ relaxation can corrupt the system.
	
	\textbf{2) Spatial Error Containment:}  In a monolithic quantum circuit, a single gate error ($\epsilon_g$) at step $t$ entangles with the rest of the register, causing exponential fidelity decay across the entire downstream state vector: $F \approx (1-\epsilon_g)^{n \cdot d}$. \cite{chiesa2018spatialisolationimplieszero,wang2026scalablequantumerrormitigation}
	%Under Algebraic Pre-Decomposition, the tensor products are broken into smaller units: An error occurring during the execution of sub-channel $\tau_i$ is confined to that specific slice; it possesses no mathematical pathway to propagate to or corrupt the data of sub-channel $\tau_j$.
	Under Algebraic Pre-Decomposition, the tensor products are broken into smaller units: An error occurring during the execution of sub-channel $\tau_i$ is confined to that specific slice.
	Because the subchannels are executed independently and no quantum information is exchanged between them, an error occurring during execution of $\tau_i$ does not directly alter the quantum state or measurement process of $\tau_j$.
	The error can nevertheless affect the final reconstructed observable through the coefficient $c_i$.
\end{remark}

\subsection{Concrete Extensions Beyond the Inner Product}
We consider four scenarios on functional decomposition traditionally difficult for NISQ hardware:

\begin{enumerate}
	\item \textbf{Two-State Hadamard Test:} For finding the Inner Product $\langle\phi|\psi\rangle$ of two real valued vectors $\vec{x}$, $\vec{w}$ in the context of linear transformations for Machine Learning \cite{imamura2026parallelhadamardtest}.
	\item \textbf{Differential Matrix Operators ($\nabla^2, \partial_t$):} For solving partial differential equations, global finite-difference stencils can be fractured into localized coordinate-difference primitives, mapped to ultra-shallow 2-qubit phase channels, and classically accumulated to yield global gradients~\cite{parr2018matrixcalculusneeddeep}.
	\item \textbf{Non-Linear Functional Approximations:} Functions like $f(x) = \tanh(x)$ or $\text{GELU}(x)$ cannot be natively run as unitaries. Classically mapping them to a Chebyshev polynomial series ($\sum c_k T_k(x)$) \cite{Rao_2023} allows each polynomial degree $T_k$ to be handled by independent, ultra-shallow single-qubit rotation channels, combining them cleanly at the reduction stage.
	\item \textbf{Convolutions:} A convolution can be expressed as a series of inner products (or dot products) between a shifted, mirrored version of a filter and an input signal. In deep learning and signal processing, this operation measures the similarity between the filter and different local regions of the input \cite{NIPS2012_c399862d, neyshabur2020learningconvolutionsscratch}.
\end{enumerate}
\section{Method Description}

\subsection {Scenario 1: Linear Transformations via Inner Product}
For two real-valued vectors $\vec{x}$ and $\vec{w}$ and their corresponding L2 normalizations
$|\psi\rangle = \frac{\vec{x}}{\|\vec{x}\|}, \quad |\phi\rangle = \frac{\vec{w}}{\|\vec{w}\|}$
The classical dot product $\vec{x} \cdot \vec{w} = \sum_{i=1}^n x_i w_i$ is defined as 
$\vec{x} \cdot \vec{w} = \langle \psi | \phi \rangle \cdot \|\vec{x}\| \cdot \|\vec{w}\|, \vec{x}, \vec{w} \in \mathcal{R}$.
The matrix product part of a linear transformation $Z = XW + B$ is the structural collection of row-column dot products: $C_{ij} = (\text{Row}_i A) \cdot (\text{Column}_j B)$.
\begin{itemize}
	\item \textbf{Map:} For n-qubits, split the N-feature vectors into $K=N/2^n$ partial sub-vectors (e.g., $K=256/2=128$ for MNIST-16 with 1-qubit partial Hadamard Tests).
	For each chunk $k \in [1,K]$, let two partial vectors be: $\mathbf{a}^{(k)}$ and $\mathbf{b}^{(k)}$.
	Calculate the partial norms: $N_a^{(k)} = \Vert{}\mathbf{a}^{(k)}\Vert{}_2$ and $N_b^{(k)} = \Vert{}\mathbf{b}^{(k)}\Vert{}_2$.
	\item \textbf{Compute:} Execute the shallow quantum circuits (Hadamard Tests) in a Multi-QPU environment.
	The normalized vectors for each QuantumCircuit will be: $\tilde{\mathbf{a}}^{(k)} = \mathbf{a}^{(k)} / N_a^{(k)}$ and $\tilde{\mathbf{b}}^{(k)} = \mathbf{b}^{(k)} / N_b^{(k)}$.
	The returned partial inner products will be: $d_k = Re\langle \tilde{\mathbf{b}}^{(k)} \vert{} \tilde{\mathbf{a}}^{(k)} \rangle$.
	\item \textbf{Reduce:} Classically sum the partial results.
	Scale the quantum result by the norms: $D_k = d_k \cdot (N_a^{(k)} \cdot N_b^{(k)})$.
	Sum the results: $D_{total} = \sum_{k=1}^{K} D_k$.
\end{itemize}

We use the quantum circuit primitive (Fig. \ref{fig:hadamard_test}) from the AQ-Stacker algorithm described in \cite{silva2024aqstacker} where $Re\langle\phi|\psi\rangle = P(0) - P(1)$.
\begin{figure}[htbp]
	\centering
	\begin{quantikz}
		\lstick{\text{\ensuremath{\ket{0}}}} & \gate{H} & \ctrl{1} & \gate{H} & \meter{} \\
		\lstick{\text{\ensuremath{\ket{0}}}} & \qw      & \gate{\text{\ensuremath{U_\phi^\dagger U_\psi}}} & \qw      & \qw
	\end{quantikz}
	\caption{Quantum circuit for the Hadamard Test of two states $\psi, \phi$ in unitary form.} 
	\label{fig:hadamard_test} 
\end{figure}
%----
\begin{comment}
\begin{figure}[htbp]
	\centering
	\begin{quantikz}
		\lstick{\text{\ensuremath{\ket{0}}}} & \gate{H} \slice[style=dashed]{\text{\footnotesize\ensuremath{\ket{\psi_1}}}} & \ctrl{1} \slice[style=dashed]{\text{\footnotesize\ensuremath{\ket{\psi_2}}}} & \gate{H} \slice[style=dashed]{\text{\footnotesize\ensuremath{\ket{\psi_{\text{final}}}}}} & \meter{} \\
		\lstick{\text{\ensuremath{\ket{0}}}} & \qw                                                                          & \gate{\text{\ensuremath{U_\phi^\dagger U_\psi}}}                             & \qw                                                                                      & \qw
	\end{quantikz}
	\caption{Quantum circuit for the Hadamard Test of two states $\psi, \phi$ in unitary form.} 
	\label{fig:hadamard_test} 
\end{figure}
\end{comment}
%----

\subsection{Scenario 2: Differential Matrix Operators}
For a discrete function or physical field $U$ evaluated over an $N$-element spatial grid, global differential operators such as the Laplacian ($\nabla^2$) or spatial gradients ($\partial_x$) are traditionally represented as large, sparse stencil matrices. 
%Compiling these as monolithic quantum operators yields deep circuits vulnerable to exponential gate error propagation. 
Compiling these as monolithic quantum operators yields deep circuits where accumulated error increases rapidly with depth and system size under the independent-gate noise model adopted here. 
AOD fractures the global stencil into localized, independent coordinate-difference primitives.

\begin{itemize}
    \item \textbf{Map:} Let the global differential operator be denoted as $\mathcal{D} \in \mathcal{L}(V, W)$. We decompose $\mathcal{D}$ into a linear combination of $K$ localized coordinate-difference primitives $\tau_k$, such that:
    \begin{equation}
        \mathcal{D} = \sum_{k=1}^{K} c_k \tau_k
    \end{equation}
    where $c_k \in \mathbb{R}$ are the physical finite-difference scale coefficients (e.g., involving grid spacing factors $1/h$ or $1/2h$). For a standard central-difference gradient, each $\tau_k$ isolates a compact, local neighborhood consisting of adjacent coordinate nodes $u_i$ and $u_{i+1}$.
    
    \item \textbf{Compute:} Instead of encoding the entire field state vector globally, each localized coordinate patch is evaluated independently. The neighboring node amplitudes are mapped to an ultra-shallow 2-qubit phase channel. Because each sub-circuit isolates a small target register, the localized differential step:
    \begin{equation}
        g_k = \text{Re}\langle \psi^{(k)} | \tau_k | \psi^{(k)} \rangle
    \end{equation}
    is executed using a shallow interferometric primitive. The circuit depth is strictly bounded by $\max_k D(\tau_k) \ll D_{\text{max}}$, preventing any spatial gate error from propagating outside the local 2-qubit coordinate boundary.
    
    \item \textbf{Reduce:} Collect the scalar measurement outcomes $g_k$ representing localized gradients. The global differential field or gradient expectation value $\langle \mathcal{D} \rangle$ is reconstructed by classically executing the linear summation over the commutative monoid:
    \begin{equation}
        \langle \mathcal{D} \rangle = \sum_{k=1}^{K} c_k g_k
    \end{equation}
    This shifts the entire structural burden of the spatial stencil assembly from deep quantum coherence to a classical addition reducer.
\end{itemize}

% circuit
\begin{figure}[htbp]
	\centering
	\resizebox{\columnwidth}{!}{%
		\begin{quantikz}
			\lstick{ancilla:} & \gate{H} \slice[style=dashed]{\text{\footnotesize\ensuremath{\lvert\Psi_1\rangle}}} & \ctrl{1} \slice[style=dashed]{\text{\footnotesize\ensuremath{\lvert\Psi_2\rangle}}} & \gate{H} \slice[style=dashed]{\text{\footnotesize\ensuremath{\lvert\Psi_{\text{final}}\rangle}}} & \meter{} \\
			\lstick{spatial\_node:} & \gate{\text{\ensuremath{U(\theta,0,0)}}} & \gate[style={fill=black, minimum width=3mm, minimum height=3mm}]{} & \qw & \qw
		\end{quantikz}%
	}
	\caption{Coordinate-difference circuit primitive detailing spatial state vector validation boundaries.}
	\label{fig:diff_matrix}
\end{figure}

In Fig. \ref{fig:diff_matrix} we track the composite state evolution of the system register $\lvert \Psi \rangle$ across each discrete operational slice:
\begin{align}
    \lvert \Psi_0 \rangle &= \lvert 0 \rangle \otimes \lvert \psi^{(k)} \rangle \\
    \lvert \Psi_1 \rangle &= \frac{1}{\sqrt{2}}\bigl(\lvert 0 \rangle + \lvert 1 \rangle\bigr) \otimes \lvert \psi^{(k)} \rangle \\
    \lvert \Psi_2 \rangle &= \frac{1}{\sqrt{2}}\Bigl( \lvert 0 \rangle \otimes \lvert \psi^{(k)} \rangle + \lvert 1 \rangle \otimes \tau_k \lvert \psi^{(k)} \rangle \Bigr) \\
    \lvert \Psi_{\text{final}} \rangle &= \frac{1}{2} \lvert 0 \rangle \otimes \bigl(I + \tau_k\bigr)\lvert \psi^{(k)} \rangle + \frac{1}{2} \lvert 1 \rangle \otimes \bigl(I - \tau_k\bigr)\lvert \psi^{(k)} \rangle
\end{align}
Measuring the auxiliary ancilla register in the computational basis yields the targeted localized gradient primitive metric:
\begin{equation}
    g_k = P(0) - P(1) = \text{Re}\langle \psi^{(k)} \vert \tau_k \vert \psi^{(k)} \rangle
\end{equation}

\subsection{Scenario 3: Non-Linear Functional Approximations}
Non-linear activation functions and physical potentials—such as $f(x) = \tanh(x)$ or $\text{GELU}(x)$—cannot be directly implemented as native quantum unitaries. Traditional approaches require deep ancilla-driven block encodings or quantum state tomography, which rapidly hit the decoherence threshold. Under the AOD framework, we approximate the non-linear functional mapping as a bounded linear combination of orthogonal polynomials, processing each degree independently.

\begin{itemize}
    \item \textbf{Map:} Let the targeted non-linear functional approximation $f(x)$ be defined over a compact interval $[-1, 1]$. We project the non-linearity onto a Chebyshev polynomial series truncated to degree $K$:
    \begin{equation}
        f(x) \approx \sum_{k=0}^{K} c_k T_k(x)
    \end{equation}
    where $c_k$ represents the classically pre-computed Chebyshev spectral coefficients, and $T_k(x)$ is the $k$-th degree Chebyshev polynomial of the first kind satisfying the recurrence relation $T_{k+1}(x) = 2xT_k(x) - T_{k-1}(x)$.
    
    \item \textbf{Compute:} Each orthogonal polynomial degree $T_k(x)$ is isolated and mapped to an independent, ultra-shallow single-qubit functional rotation channel. The parameter $x$ is encoded directly into a parameterized state preparation angle $\theta = \arccos(x)$. Because the channels are isolated:
    \begin{equation}
        y_k = \langle \psi(\theta) | T_k | \psi(\theta) \rangle
    \end{equation}
    The individual quantum sub-components are bounded at a static, minimal circuit depth $D(T_k) = \mathcal{O}(1) \ll D_{\text{max}}$. This eliminates cumulative depth scaling entirely from the functional complexity.
    
    \item \textbf{Reduce:} Collect the scalar expectation outputs $y_k$ from the execution channels. The original non-linear evaluation is reconstructed by scaling the outputs with the extracted spectral coefficients over the classical addition monoid:
    \begin{equation}
        f(x) \approx \sum_{k=0}^{K} c_k y_k
    \end{equation}
    As a result, non-linear functional properties are evaluated with spatial isolation from hardware noise propagation.
\end{itemize}

\begin{figure}[htbp]
	\centering
	\resizebox{\columnwidth}{!}{%
		\begin{quantikz}
			\lstick{ancilla:} & \gate{H} \slice[style=dashed]{\text{\footnotesize\ensuremath{\lvert\Psi_1\rangle}}} & \ctrl{1} \slice[style=dashed]{\text{\footnotesize\ensuremath{\lvert\Psi_2\rangle}}} & \gate{H} \slice[style=dashed]{\text{\footnotesize\ensuremath{\lvert\Psi_{\text{final}}\rangle}}} & \meter{} \\
			\lstick{functional\_target:} & \qw & \gate{R_X(2k\theta)} & \qw & \qw
		\end{quantikz}%
	}
	\caption{Quantum circuit layout for the isolated $k$-th degree Chebyshev polynomial rotation primitive.}
	\label{fig:chebyshev_circuit_primitive}
\end{figure}

In Fig. \ref{fig:chebyshev_circuit_primitive}, we track the state vector evolution across each stage of the isolated $k$-th degree Chebyshev rotation channel:
\begin{align}
    \lvert \Psi_0 \rangle &= \lvert 0 \rangle_{\text{ancilla}} \otimes \lvert 0 \rangle_{\text{target}} \\
    \lvert \Psi_1 \rangle &= \frac{1}{\sqrt{2}}\bigl(\lvert 0 \rangle + \lvert 1 \rangle\bigr) \otimes \lvert 0 \rangle \\
    \lvert \Psi_2 \rangle &= \frac{1}{\sqrt{2}}\lvert 0 \rangle \otimes \lvert 0 \rangle + \frac{1}{\sqrt{2}}\lvert 1 \rangle \otimes R_X(2k\theta)\lvert 0 \rangle
\end{align}
Expanding the single-qubit functional rotation matrix action $R_X(2k\theta) = \cos(k\theta)I - i\sin(k\theta)X$ onto the target ground state yields:
\begin{equation}
    \lvert \Psi_2 \rangle = \frac{1}{\sqrt{2}}\lvert 00 \rangle + \frac{1}{\sqrt{2}}\lvert 1 \rangle \otimes \bigl(\cos(k\theta)\lvert 0 \rangle - i\sin(k\theta)\lvert 1 \rangle\bigr)
\end{equation}
Applying the final interference Hadamard gate to the auxiliary ancilla register collapses the superposition to:
\begin{equation}
\begin{split}
    \lvert \Psi_{\text{final}} \rangle &= \frac{1}{2}\lvert 0 \rangle \otimes \bigl[(1 + \cos(k\theta))\lvert 0 \rangle - i\sin(k\theta)\lvert 1 \rangle\bigr] \\
    &\quad + \frac{1}{2}\lvert 1 \rangle \otimes \bigl[(1 - \cos(k\theta))\lvert 0 \rangle + i\sin(k\theta)\lvert 1 \rangle\bigr]
\end{split}
\end{equation}

\begin{remark}{}
%The AOD framework projects a bounded non-linear functional mapping $f(x)$ onto a truncated Chebyshev polynomial series of degree $K$. 
The connection between the $k$-th degree Chebyshev polynomial of the first kind, $T_k(x)$, and the physical single-qubit rotation matrix $R_X(2k\theta)$ relies on the fundamental trigonometric identity:
\begin{equation}
    T_k(x) = \cos(k\theta), \quad \text{where } \theta = \arccos(x) \text{ for } x \in [-1, 1].
\end{equation}

\begin{enumerate}
\item The parameter $x$ is directly encoded into the quantum state preparation angle $\theta$. The single-qubit functional rotation matrix acting on the target register is parameterized by $2k\theta$ and expands via Euler's relation as:
\begin{equation}
    R_X(2k\theta) = \cos(k\theta)I - i\sin(k\theta)X.
\end{equation}

\item When applied to the target ground state $|0\rangle_{\text{target}}$, the operator yields:
\begin{equation}
    R_X(2k\theta)|0\rangle = \cos(k\theta)|0\rangle - i\sin(k\theta)|1\rangle.
\end{equation}

\item As detailed by the circuit primitive in Fig.~\ref{fig:chebyshev_circuit_primitive}, an auxiliary ancilla qubit is initialized in a uniform superposition via a Hadamard gate to form the state $|\Psi_1\rangle = \frac{1}{\sqrt{2}}(|0\rangle + |1\rangle) \otimes |0\rangle$. Applying the controlled-$R_X(2k\theta)$ gate generates the entangled state vector $|\Psi_2\rangle$.
%\begin{equation}
%    |\Psi_2\rangle = \frac{1}{\sqrt{2}}|00\rangle + \frac{1}{\sqrt{2}}|1\rangle \otimes \left( \cos(k\theta)|0\rangle - i\sin(k\theta)|1\rangle \right).
%\end{equation}
\item Applying the final Hadamard gate to the auxiliary ancilla register collapses the state vector to $|\Psi_{\text{final}}\rangle$.
%\begin{equation}
%    |\Psi_{\text{final}}\rangle = \frac{1}{2}|0\rangle \otimes \left[ (1 + \cos(k\theta))|0\rangle - i\sin(k\theta)|1\rangle \right] + \frac{1}{2}|1\rangle \otimes \left[ (1 - \cos(k\theta))|0\rangle + i\sin(k\theta)|1\rangle \right].
%\end{equation}
\item Measuring the auxiliary ancilla register in the computational Z-basis yields the targeted localized polynomial metric $y_k$ via the difference in outcome probabilities, isolating the exact analytical value of $T_k(x)$:
\begin{equation}
    y_k = P(0) - P(1) = \cos(k\theta) = T_k(x).
\end{equation}
\end{enumerate}
The total approximation error therefore contains both the Chebyshev truncation error and the quantum estimation error associated with noisy finite-shot evaluation of each $T_k(x)$.
\end{remark}

\subsection{Scenario 4: Convolutions}
%While standard 1-D vector space operators accommodate flat machine learning weights, 
High-dimensional computer vision operations such as spatial two-dimensional convolutions ($\text{Conv2D}$) traditionally present a severe depth bottleneck for NISQ systems due to massive sliding-window tensor contractions. 
%To extend the Algebraic Operator Decomposition (AOD) paradigm to spatial feature extraction, 
We use a parallelized quantum convolution layer mapped to an optimized classical reduction monoid using the \texttt{im2col} (Image-to-Column) transformation to expose the convolution as a collection of linear inner products to which AOD can be applied.

\begin{itemize}
    \item \textbf{Map:} Let the input tensor be an $N$-element spatial feature map $X \in \mathbb{R}^{C \times H \times W}$ and the targeted filter be a convolution kernel bank $K \in \mathbb{R}^{C_{\text{out}} \times C \times k_H \times k_W}$, where $C$ is the channel depth and $k_H \times k_W$ represents the kernel spatial footprint. The host uses a sliding window to extract every overlapping spatial coordinate neighborhood across the feature map. These sub-blocks are dynamically mapped to a vectorized patch matrix $A \in \mathbb{R}^{(H_{\text{out}} W_{\text{out}}) \times (C k_H k_W)}$, where the rows represent independent spatial translation coordinates and the columns aggregate flattened feature components. The convolution filters are flattened into a static kernel weights matrix $B \in \mathbb{R}^{(C k_H k_W) \times C_{\text{out}}}$. This maps the spatial stencil task into a single General Matrix Multiplication (GEMM) form:
    \begin{equation}
        Y_{\text{GEMM}} = A \times B
    \end{equation}
    
    \item \textbf{Compute:} Evaluate the exact inner products $\langle \psi^{(i)} \vert \tau_j \vert \psi^{(i)} \rangle$ for every row-column matrix coordinate intersection concurrently. Instead of compounding system noise by initializing a deep monolithic register, each row patch $A_{i,:}$ is L2-normalized and evaluated across independent, parallelized 2-qubit phase channels:
    \begin{equation}
        G_{i,j} = \text{Re}\langle \tilde{A}_{i,:} \vert \tilde{B}_{:,j} \vert \tilde{A}_{i,:} \rangle = P(0) - P(1)
    \end{equation}
    The physical circuit depth is strictly bounded at $D(\tau) = \mathcal{O}(1) \ll D_{\text{max}}$, removing spatial depth bottlenecks. By decoupling the joint noise map across discrete spatial coordinates ($\frac{\partial \Lambda_i(\rho_i)}{\partial \rho_j} = 0$), any active gate fault or depolarizing relaxation event inside spatial step $i$ remains trapped within that localized sub-register $\mathcal{H}_i$, satisfying the criteria for \textit{Spatial Error Isolation} (Theorem 1).
    
    \item \textbf{Reduce:} Scale the matrix elements directly against the pre-computed row-column L2 matrix norms matrix $S = \lVert A_{i,:} \rVert_2 \times \lVert B_{:,j} \rVert_2$ to restore the analytical magnitudes:
    \begin{equation}
        Y_{\text{scaled}} = G_{\text{quantum}} \odot S
    \end{equation}
    Finally, the reconstructed 2D array is reshaped into a structural 4D feature map layout:
    \begin{equation}
        Y \in \mathbb{R}^{\text{Batch} \times C_{\text{out}} \times H_{\text{out}} \times W_{\text{out}}}
    \end{equation}
    %This shifts the structural burden of spatial parameter tracking from fragile, deep quantum coherence memory over to a stable classical tensor addition reducer.
\end{itemize}
This experiment evaluates AOD as a noise/depth-management architecture rather than demonstrating end-to-end quantum computational advantage over classical convolution.

\begin{figure*}[t!]
    \centering
    \includegraphics[width=0.8\textwidth]{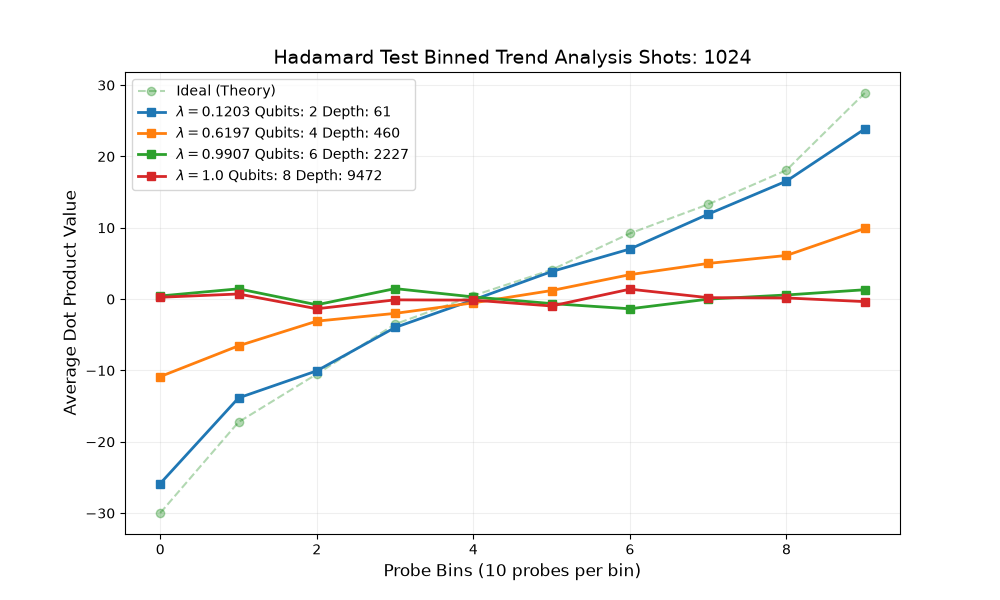}
    %\caption{Visual proof that dividing the feature space keeps the circuit safely below the noise threshold.}
	\caption{Visual Feature Space Division. For shallow partitions ($\lambda = 0.1203$, 2 Qubits, Depth: 61), the average dot product value tightly tracks the ideal baseline. As the circuit depth increases to $D = 460$ ($\lambda = 0.6197$, 4 Qubits), a noticeable flattening of the slope occurs. This highlights the early stages of information dissipation.}
    \label{fig:trend_analysis}
\end{figure*}
\begin{figure*}[t!]
    \centering
    \includegraphics[width=0.8\textwidth]{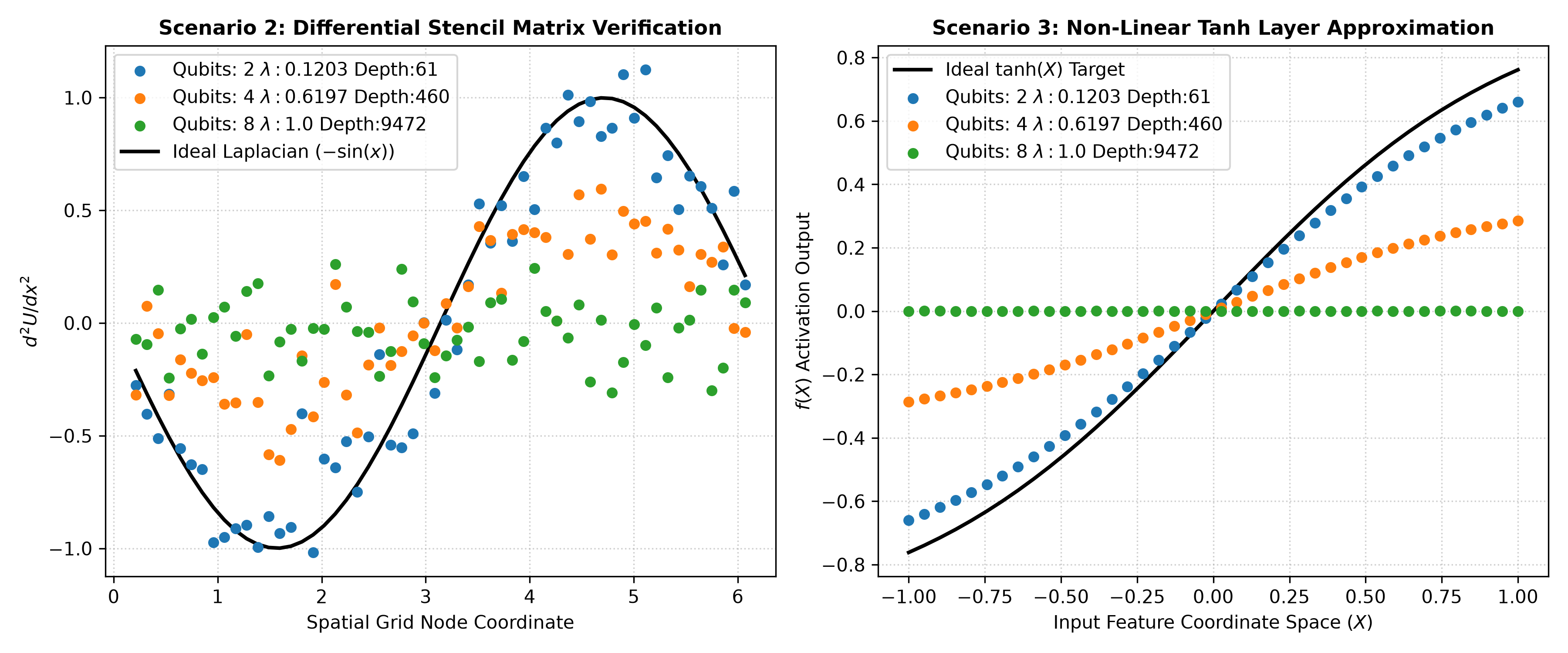}
	\caption{In the low-depth partition tier ($n=2$, $\lambda=0.1203$, $D=61$), both mathematical operators demonstrate excellent structural stability, tracking the precise contours of the exact analytical targets ($-\sin(x)$ and $\tanh(X)$) with minimal variance. At the deepest monolithic threshold ($n=8$, $\lambda=1.0$, $D=9472$), phase-decoherence completely dominates the system.}
    \label{fig:trend_analysis1}
\end{figure*}
\begin{figure*}[t!]
    \centering
    \includegraphics[width=0.8\textwidth]{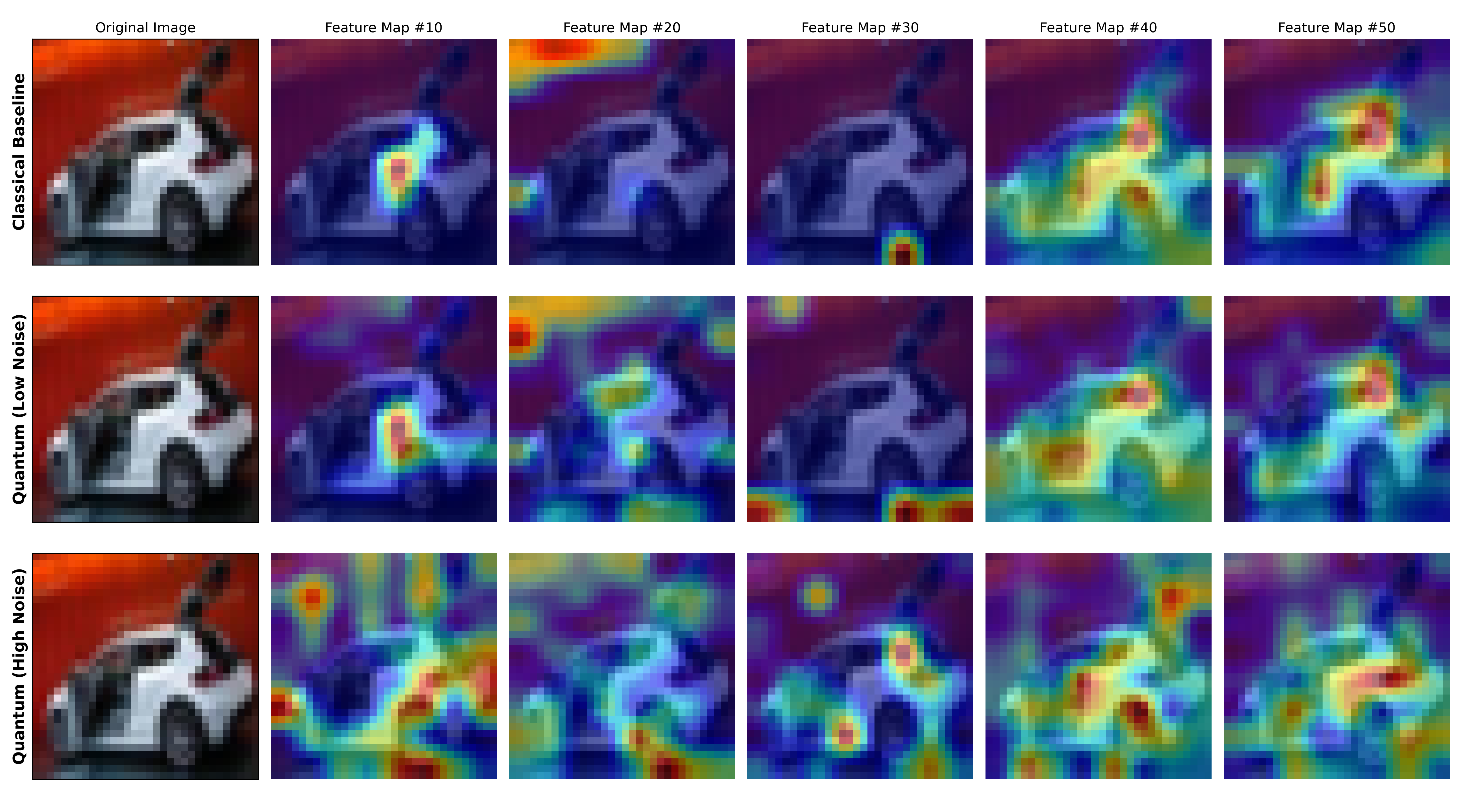}
	\caption{Quantum Convolution: forward inference pass on the first car image of CIFAR-10 under Ideal, Low Noise ($\lambda = 0.1203$, 2 Qubits, Depth: 61) and High Noise ($\lambda = 0.6197$, 4 Qubits, Depth: 460) settings using algorithms \ref{alg:hadamard_vectorized}, and \ref{alg:quantum_conv2d}. Note: Algorithm \ref{alg:hadamard_vectorized} emulates the expected measurement statistics of the Hadamard-test primitive under the adopted depolarizing/readout noise model.}
    \label{fig:conv2d_quantum}
\end{figure*}
\begin{figure}[htbp]
    \centering
    \includegraphics[width=\linewidth]{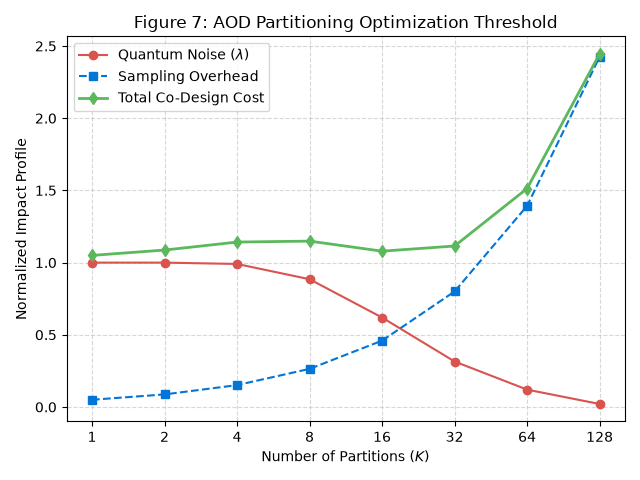}
    \caption{AOD partitioning tradeoff across $K$ channels. The optimization curve illustrates the competitive crossover front between quantum gate attenuation and classical sampling expansion factors.}
    \label{fig:tradeoff}
\end{figure}
\section{Simulation Results}
We begin with Scenario 1 for a linear transformation of an arbitrary batch of the MNIST-16 dataset and corresponding random weights of shapes $X(1,256), W(256, 128)$ (Fig. \ref{fig:trend_analysis}).
The resulting vectors are sorted by the ideal indices and binned in groups of 10 to cleanup the curves.
Our results show that, under the selected hardware error parameters (Table \ref{tab:circuit_complexity_mapreduce}), increasing circuit depth from 61 to 460 produces substantial signal flattening, while the depth-9472 case approaches complete depolarization in the adopted noise model.
%Our results show that, at depths exceeding the {\decot} $D \approx 500$, the quantum state undergoes complete phase-decoherence.
The contrast between the 2-qubit and 8-qubit fidelity response empirically validates our AOD paradigm: fracturing a deep monolithic channel into small, shallow sub-channels provides a practical mechanism for maintaining individual circuit executions within the lower-noise regime of the adopted model.

\subsection{Cross-Workload Behavior Under the Noise Model}
The paradigm is further validated in Figure \ref{fig:trend_analysis1}, across both spatial differential stencil matrix operations (Scenario 2) and non-linear polynomial activation loops (Scenario 3): signal dissipation shows the same decay profile seen in monolithic inner products, establishing a noise baseline.
As the partition registers expand to a moderate configuration ($n=4$, $\lambda=0.6197$, $D=460$), a uniform flattening of the amplitude response emerges, signifying initial information loss where depolarizing noise contracts state distributions toward a coin-flip probability threshold ($p_0 \to 0.5$).
In summary, similar depth-dependent attenuation is observed across the differential and nonlinear examples.

\subsection{Scenario 4: Convolutions}
Fig. \ref{fig:conv2d_quantum} shows the impact of the {\decot} on spatial feature extractions:
\begin{itemize}
	\item \textbf{Row 1 (Classical Baseline):} Matches current localized hot-spots tracking the lines of the car door and roofline precisely.
	\item \textbf{Row 2 (Quantum Low Noise):} The spatial contours look slightly fuzzy due to shot sampling noise, but clearly preserve the overall structural geometry of the car.
	%\item \textbf{Row 3 (Quantum High Noise):} The heatmap highlights bleed randomly across the image frame or entirely flatten into a uniform dull hue. This visualizes Spatial Isolation (Theorem 1) empirically: showing that once a register breaks past the {\decot}, the state decays toward a coin-flip probability, wiping out edge-detection features.
	\item \textbf{Row 3 (Quantum High Noise):} The heatmap highlights bleed randomly across the image frame or entirely flatten into a uniform dull hue. This illustrates the practical consequence of the depth-dependent noise model for the reconstructed spatial output.
\end{itemize}

\begin{algorithm}
\scriptsize
\caption{Vectorized Hadamard Test Emulator with Gate Noise and Readout Error}
\label{alg:hadamard_vectorized}
\begin{algorithmic}[1] % [1] enables line numbering

\Function{dot\_2D\_vectorized}{$A, B$}
    \State \textbf{Input:} Matrix $A \in \mathbb{C}^{R \times K}$, Matrix $B \in \mathbb{C}^{K \times C}$
    \State \textbf{Output:} Simulated noisy scaled quantum expectation matrix
    
    \State $N_{\text{shots}} \gets \text{\_\_dot\_sim\_shots}$ \Comment{Global configuration parameters}
    \State $\lambda \gets \text{\_\_noise\_level}$
    \State $\epsilon_{\text{read}} \gets \text{\_\_readout\_error}$
    
    \Statex
    \State // 1. Compute exact dot products via matrix multiplication
    \State $D \gets A \times B$
    
    \Statex
    \State // 2. Compute matrix norms for quantum state scaling
    \State $V_A \gets \begin{bmatrix} \|A_{1,:}\|_2 & \|A_{2,:}\|_2 & \dots & \|A_{R,:}\|_2 \end{bmatrix}^T$
    \State $V_B \gets \begin{bmatrix} \|B_{:,1}\|_2 & \|B_{:,2}\|_2 & \dots & \|B_{:,C}\|_2 \end{bmatrix}$
    \State $S \gets V_A \times V_B$ \Comment{Scale matrix of shape $R \times C$}
    
    \Statex
    \State // 3. Calculate ideal probability $P_0$ for Hadamard test
    \If{$S_{i,j} \neq 0$}
        \State $\bar{D}_{i,j} \gets \frac{\text{Re}(D_{i,j})}{S_{i,j}}$
    \Else
        \State $\bar{D}_{i,j} \gets 0$
    \EndIf
    \State $P_0 \gets \frac{1}{2} \left( \mathbf{1} + \bar{D} \right)$
    
    \Statex
    \State // 4. Inject gate depolarizing noise and readout error
    \State $P_0^{\text{gate}} \gets (1 - \lambda) P_0 + \frac{\lambda}{2}$
    \State $p(0|0) \gets 1 - \epsilon_{\text{read}}$
    \State $p(0|1) \gets \epsilon_{\text{read}}$
    \State $P_0^{\text{hw}} \gets \left( P_0^{\text{gate}} \odot p(0|0) \right) + \left( (\mathbf{1} - P_0^{\text{gate}}) \odot p(0|1) \right)$
    
    \Statex
    \State // 5. Batch stochastic sampling over hardware probability matrix
    \State $P_0^{\text{clipped}} \gets \min(\max(P_0^{\text{hw}}, \mathbf{0}), \mathbf{1})$
    \State $N_0 \sim \text{Binomial}(n=N_{\text{shots}}, \, p=P_0^{\text{clipped}})$
    
    \Statex
    \State // 6. Reconstruct expectation value and re-scale
    \State $E \gets 2 \left( \frac{N_0}{N_{\text{shots}}} \right) - \mathbf{1}$
    \State \textbf{return} $E \odot S$
\EndFunction

\end{algorithmic}
\end{algorithm}

\begin{algorithm}[htbp]
\scriptsize
\caption{Noisy Quantum 2D Convolution via Vectorized \texttt{im2col} Mapping}
\label{alg:quantum_conv2d}
\begin{algorithmic}[1]
\Require Input feature map $X \in \mathbb{R}^{C \times H \times W}$, Convolution kernel $K \in \mathbb{R}^{C_{\text{out}} \times C \times k_H \times k_W}$, stride $s$, padding $p$
\Ensure Output tensor $Y \in \mathbb{R}^{C_{\text{out}} \times H_{\text{out}} \times W_{\text{out}}}$
\Statex

\Function{Conv2D\_Quantum}{$X, K, s, p$}
    \State // 1. Enforce uniform shape profiles for missing channel dimensions
    \If{$\text{ndim}(X) = 2$} \Comment{Shape to $(1, H, W)$}
        \State $X \gets X[\text{NewAxis}, :, :]$ 
    \EndIf
    \If{$\text{ndim}(K) = 2$} \Comment{Shape to $(1, 1, k_H, k_W)$}
        \State $K \gets K[\text{NewAxis}, \text{NewAxis}, :, :]$ 
    \EndIf
    
    \State Extract configurations: $C, H_{\text{in}}, W_{\text{in}} \gets \text{shape}(X)$ and $C_{\text{out}}, \_, k_H, k_W \gets \text{shape}(K)$
    \Statex
    \State // 2. Apply boundary zero-padding configurations
    \If{$p > 0$}
        \State $X \gets \text{Pad}(X, \text{pad}=p, \text{mode}=\text{'constant'})$ 
        \State $H_{\text{in}}, W_{\text{in}} \gets \text{shape}(X)$
    \EndIf
    \Statex
    \State // 3. Compute resolution bounds for target feature map
    \State $H_{\text{out}} \gets \lfloor (H_{\text{in}} - k_H)/s \rfloor + 1$
    \State $W_{\text{out}} \gets \lfloor (W_{\text{in}} - k_W)/s \rfloor + 1$
    \Statex
    \State // 4. Map Stage: Vectorized patch isolation over sliding windows
    \State $X_{\text{windows}} \gets \text{SlidingWindowView}(X, \text{window}=(k_H, k_W))[:, \, ::s, \, ::s, \, :, \, :]$
    \Statex
    \State // 5. Flatten sub-tensors into linear 2D operator matrices
    \State $A \gets \text{Reshape}(X_{\text{windows}}, \text{shape}=(H_{\text{out}} \cdot W_{\text{out}}, \, C \cdot k_H \cdot k_W))$
    \State $B \gets \text{Reshape}(K, \text{shape}=(C_{\text{out}}, \, C \cdot k_H \cdot k_W))^T$
    \Statex
    \State // 6. Execute the Noisy Hadamard Matrix Multiplication
    \State $G_{\text{quantum}} \gets \text{\textsc{Dot\_2D\_Vectorized}}(A, B)$ 
    \Statex
    \State // 7. Reshape the 2D output to 3D Out Features (Out\_C, h\_out, w\_out)
    \State $Y \gets \text{Reshape}(G_{\text{quantum}}^T, \text{shape}=(C_{\text{out}}, \, H_{\text{out}}, \, W_{\text{out}}))$
    
    \State \Return $\text{Squeeze}(Y)$
\EndFunction
\end{algorithmic}
\end{algorithm}

\section{Discussion}
\subsection{Co-Design Optimization: The AOD Partitioning Tradeoff}

The practical viability of the AOD paradigm hinges on balancing quantum hardware resilience against classical reconstruction overhead. As evaluated in Fig.~\ref{fig:tradeoff}, configuring the partition slicing layout yields two distinct operational regimes:
\begin{itemize}
    \item \textbf{The Monolithic/Low-K Regime ($K \le 4$):} Circuits feature wide register allocations and deep instruction layers. Here, spatial gate errors propagate unchecked, driving the cumulative depolarizing noise parameter $\lambda \to 1.0$ and completely wiping out the signal.
    \item \textbf{The Highly-Partitioned Regime ($K \ge 64$):} While sub-circuits are ultra-shallow ($D \le 61$) and near-ideal, the classical sampling budget expands quadratically according to the 1-norm coefficient tax ($\|c\|_1^2$) derived in Theorem 2.
\end{itemize}
An intermediate configuration ($K=16$ to $K=32$) shows a clear co-design sweet spot, minimizing total computational friction and giving a structured operational framework for NISQ execution.

\subsection{Formalization of the Co-Design Partition Selection Problem}
To move beyond a qualitative heuristic, we formalize the partition topology selection as a constrained multi-objective co-design optimization problem. Let $\mathcal{K}$ be the set of valid partition layouts. A system architect seeks to find the optimal partitioning granularity $K^*$ that minimizes total operational cost while strictly respecting a hardware-imposed or application-defined error bound $\epsilon_{\max}$:

\begin{equation}
    K^* = \arg\min_{K \in \mathcal{K}} \mathcal{C}_{\text{total}}(K) \quad \text{subject to} \quad \mathcal{E}_{\text{total}}(K) \le \epsilon_{\max}
\end{equation}

We define the total operational cost model $\mathcal{C}_{\text{total}}(K)$ as a weighted combination of quantum resource execution metrics and classical host overhead processing burdens:
\begin{equation}
    \mathcal{C}_{\text{total}}(K) = w_q \cdot S_{\text{total}}(K) \cdot \bar{D}(K) + w_c \cdot \mathcal{T}_{\text{classical}}(K)
\end{equation}
where $S_{\text{total}}(K)$ is the optimal shot allocation derived from Theorem 2, $\bar{D}(K)$ is the mean gate depth of the resulting sub-circuits, $\mathcal{T}_{\text{classical}}(K)$ is the classical wall-clock time required to execute the commutative linear recombination monoid, and $w_q, w_c$ are user-defined optimization weight parameters reflecting local infrastructure priorities (e.g., QPU queue times vs. CPU cluster availability).

Concurrently, the global reconstruction error $\mathcal{E}_{\text{total}}(K)$ aggregates both physical gate decoherence profiles and classical sampling variance components:
\begin{equation}
    \mathcal{E}_{\text{total}}(K) \le \Delta_T(K) + \frac{\|c(K)\|_1}{\sqrt{S_{\text{total}}(K)}}
\end{equation}
where $\Delta_T(K)$ represents the deterministic physical error bound defined by Theorem 1, and the second term accounts for the stochastic statistical sampling uncertainty dictated by the central limit theorem. 

By bounding $\mathcal{E}_{\text{total}}(K) \le \epsilon_{\max}$, the constraint dynamically disqualifies low-$K$ layouts whose physical depth violates the hardware coherence envelope ($D > D_{\max}$), while the objective function naturally penalizes high-$K$ configurations due to their severe 1-norm sampling expansion factors. 
This formalization positions the choice of partition layout as a standard, co-design trade-off landscape for algorithmic execution.

\subsection{Architectural Assumptions}
Although, our architecture has a proactive philosophy with solid mathematical grounding in Cauchy’s Functional Equation and the Riesz Representation Theorem, it makes several assumptions:
\begin{itemize}
	\item \textbf{QPU Hardware:} It heavily assumes access to an asynchronous, parallel multi-QPU network to circumvent time constraints.
	Even though today's classical latency overhead of distributing and collecting these jobs (e.g., REST API queuing times or cloud job scheduling bottlenecks), dwarf raw quantum execution times,
	these latencies can be mitigated with platforms such as IBM Qiskit Serverless which unify classical and quantum resources hybrid workflows without managing underlying infrastructure \cite{qiskit_serverless}.
	\item \textbf{Sampling Overhead \& Practical Bottlenecks:} While AOD evades exponential phase-decoherence by fracturing deep circuits, it introduces a pronounced sampling overhead governed by the profile of the decomposition. As formalized in Theorem 2, the total hardware shot budget scales quadratically with the 1-norm of the classical coefficient vector ($\|c\|_1^2$). 
	%For high-dimensional dense matrices, complex differential stencils with fine grid resolutions, or non-linear functions requiring high-degree Chebyshev expansions, this ``1-norm tax'' can grow rapidly, leading to a significant measurement bottleneck. 
	Although this ``1-norm tax'' can grow rapidly, it can be controlled by adjusting the partition size: Larger partitions reduce the number of circuits and the total shot budget. However, they also create deeper circuits that suffer from higher noise and decoherence.
	\item \textbf{Spatial Error Containment:} Although errors are confined to sub-chanel $(\tau_i)$, this does not mean the global computation is protected from that error.
	If $T=\sum_k c_k\tau_k$ and one measurement $(y_i)$ is biased by hardware noise, then the reconstructed result
	$\hat T=\sum_k c_k y_k$ is still biased by approximately $(c_i\Delta_i)$.
	AOD contains the error \textbf{between channels}, it does not eliminate its contribution to the final observable. This is formalized in Theorem 1.
\end{itemize}

%\textit{In summary, the practical viability of AOD hinges on the partition size that best minimizes noise and decoherence while wrangling the total sampling overhead in a multi-QPU setup.}
\section{Conclusion and Future Work}
%In this work, we presented a unified anti-decoherence and spatial error-mitigation framework using Algebraic Operator Decomposition (AOD) and resource-adaptive parallel Hadamard Test stacking. 
%Furthermore, we showed that the AOD paradigm seamlessly extends beyond linear matrix transformations to non-linear layers and spatial differential transformations. By mapping complex operators onto independent, ultra-shallow sub-channels, our architecture systematically subverts the physical constraints of the NISQ era, bounding quantum execution time entirely below the physical decoherence threshold. Our multi-lambda benchmarking empirically validated the existence of a universal "{\decot}" near a gate depth of $D \approx 500$, illustrating that fine-grained space partitioning is not merely an optimization, but a physical necessity to prevent quantum expectations from dissipating into noise.

AOD provides an operator-level framework for transforming a deep quantum computation into a collection of shallower, independently executable subchannels followed by classical reconstruction. Under the depolarizing model studied here, the resulting reduction in maximum circuit depth can substantially reduce noise accumulation, at the cost of additional classical computation and sampling. The results across inner products, differential operators, nonlinear approximations, and convolution demonstrate the generality of the decomposition principle, while the optimal partition size remains hardware- and workload-dependent.

\subsection{Future Research Directions}
While this study establishes the foundations of noise mitigation via operator fracturing, several avenues of exploration remain:

\begin{itemize}
    \item \textbf{Integration with Transformer Architectures:} Given that the self-attention mechanism in modern Large Language Models (LLMs) is fundamentally dominated by $\mathcal{O}(N^2)$ matrix operations, our primary immediate frontier focuses on accelerating the $QK^T$ attention-score calculation. By mapping attention matrices to large vertical Hadamard Test stacks, the time complexity scales relative to the stacking factor $K$, theoretically enabling context-window expansion far beyond classical memory-bandwidth limitations.
    %\item \textbf{Crosstalk Mitigation and Real-Hardware Schedulers:} Massive vertical stacking inherently increases the risk of physical inter-qubit crosstalk and multi-register thermal relaxation. Future iterations of our compiler pipeline will focus on developing hardware-aware spatial schedulers that dynamically optimize the balancing pattern, trading off parallel register density against the cumulative global system error rate.
    \item \textbf{High-Dimensional Tensor Scaling:} We aim to scale our vectorized benchmarks to complex computer vision spaces, such as CIFAR-10 and ImageNet, systematically tracking the resource-adaptive limits of the feature registers when processing dense tensors.
\end{itemize}

\appendix

\newtheorem{theorem}{Theorem}
\newtheorem{proofsketch}{Proof Sketch}

\subsection{Theoretical Foundations of Spatial Error Containment}
To rigorously justify the claim of error containment under Algebraic Operator Decomposition (AOD), we formalize the mapping of error channels across partitioned registers.

\begin{theorem}[AOD Reconstruction Error Bound Under Independent Local Noise]
Let $T \in \mathcal{L}(V, W)$ be a global linear operator decomposed into $K$ independent sub-channels such that $T = \sum_{k=1}^K c_k \tau_k$. Let $\Lambda_k$ represent the independent local noise channels operating on sub-registers of qubit width $n_k$ and depth $D_k \ll D_{\max}$. Assuming the global noisy channel factorizes cleanly across the independent hardware executions as $\Lambda = \bigotimes_k \Lambda_k$, the absolute deviation $\Delta_T = |\langle T \rangle_\Lambda - \langle T \rangle_{\text{ideal}}|$ between the noisy reconstructed expectation value and the ideal expectation value satisfies:
\begin{equation}
    \Delta_T \le 2\sum_{k=1}^K |c_k| \cdot \|\tau_k\|_\infty \cdot \mathcal{D}\left(\Lambda_k(\rho_k), \rho_{k,\text{ideal}}\right)
\end{equation}
where $\mathcal{D}(\rho, \sigma) = \frac{1}{2}\text{Tr}\sqrt{(\rho-\sigma)^\dagger(\rho-\sigma)}$ denotes the conventional quantum state trace distance.
\end{theorem}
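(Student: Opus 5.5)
The argument rests on linearity of the reconstruction together with the standard duality between the operator norm and the trace norm. We interpret the expectation values as in the Reduce step of the AOD framework:
\[
\langle T\rangle_\Lambda = \sum_k c_k \operatorname{Tr}\bigl[\tau_k \Lambda_k(\rho_k)\bigr], \qquad \langle T\rangle_{\text{ideal}} = \sum_k c_k \operatorname{Tr}\bigl[\tau_k \rho_{k,\text{ideal}}\bigr].
\]
Each sub-channel $k$ is executed on its own register, so only the marginal $\Lambda_k(\rho_k)$ enters the $k$-th term. This is exactly where the factorization assumption $\Lambda=\bigotimes_k\Lambda_k$ is used. On a product input, tracing out all registers except the $k$-th leaves $\Lambda_k(\rho_k)$. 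Consequently no term involves a cross-register correlation, and the deviation splits termwise.

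Subtracting the two expressions gives
\[
\Delta_T = \Bigl|\sum_k c_k \operatorname{Tr}\bigl[\tau_k(\Lambda_k(\rho_k)-\rho_{k,\text{ideal}})\bigr]\Bigr|.
\]
The triangle inequality and homogeneity reduce this to $\sum_k |c_k|\,\bigl|\operatorname{Tr}[\tau_k X_k]\bigr|$, where $X_k=\Lambda_k(\rho_k)-\rho_{k,\text{ideal}}$. Each $X_k$ is Hermitian and traceless, since both states have unit trace.

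We then apply Hölder's inequality for Schatten norms:
\[
|\operatorname{Tr}[\tau_k X_k]| \le \|\tau_k\|_\infty \|X_k\|_1 .
\]
Here $\|X_k\|_1=\operatorname{Tr}\sqrt{X_k^\dagger X_k}$, which equals $2\mathcal{D}(\Lambda_k(\rho_k),\rho_{k,\text{ideal}})$ by the definition of trace distance in the statement. Substituting and summing over $k$ yields the claimed bound with the factor $2$.

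One might tighten the constant using tracelessness, via $|\operatorname{Tr}[\tau X]|\le (\lambda_{\max}-\lambda_{\min})\|X\|_1/2$. This is not needed for the stated inequality, so the proof will keep the factor $2$.

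The main obstacle is not the inequality chain but justifying the modeling step that defines $\langle\tau_k\rangle$ as $\operatorname{Tr}[\tau_k\,\cdot\,]$ on the sub-register state. In the Hadamard-test primitives of Scenarios 1--4, the measured quantity $P(0)-P(1)$ is the expectation of $Z$ on the ancilla of the full $(n_k+1)$-qubit test state, not of $\tau_k$ on $\rho_k$. Two treatments are possible:
\begin{itemize}
\item Take $\rho_k$ to be the complete register state of the $k$-th circuit, including the ancilla, and $\tau_k$ the effective measured observable, with $\|Z\|_\infty=1$ and the normalization absorbed into $c_k$.
\item Note that for a unitary $\tau_k$ the Hadamard test is equivalent to measuring $\operatorname{Re}\,\tau_k$, and $\|\operatorname{Re}\,\tau_k\|_\infty\le\|\tau_k\|_\infty$.
\end{itemize}
The proof will state this identification explicitly as an assumption. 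The depth condition $D_k\ll D_{\max}$ plays no role in the inequality itself; it only ensures the trace-distance terms are small. For instance, under the depolarizing model, $\mathcal{D}\le\lambda_k$.
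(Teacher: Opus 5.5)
Your proposal is correct and follows the same route as the paper's proof. Both write $\langle T\rangle_\Lambda=\sum_k c_k\operatorname{Tr}[\tau_k\Lambda_k(\rho_k)]$ by linearity, apply the triangle inequality and then Hölder's inequality $|\operatorname{Tr}[\tau_k X_k]|\le\|\tau_k\|_\infty\|X_k\|_1$, and substitute $\|X_k\|_1=2\mathcal{D}$; the paper also specializes to the depolarizing model and notes $\|\tau_k\|_\infty\le 1$, neither of which the stated inequality needs, and it leaves implicit the Hadamard-test identification you rightly flag as a modeling assumption.
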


\begin{proof}
Let the ideal state preparation and execution channel for an arbitrary sub-component $\tau_k$ be represented by the dense pure state density matrix $\rho_{k,\text{ideal}} = |\psi_k\rangle\langle\psi_k|$. In the presence of local hardware noise, the execution on an isolated $n_k$-qubit QPU register transforms the target state via a completely positive trace-preserving (CPTP) map $\Lambda_k(\rho_k)$. We express this local noise profile under a generalized depolarizing Kraus representation:
\begin{equation}
    \Lambda_k(\rho_k) = (1 - \lambda_k)\rho_{k,\text{ideal}} + \lambda_k \frac{I}{2^{n_k}}
\end{equation}
where $\lambda_k = 1 - (1 - \epsilon_g)^{n_k \cdot D_k}$ defines the localized volumetric noise parameter, and $I/2^{n_k}$ represents the maximally mixed state on the $n_k$-qubit local Hilbert space.

By invoking the linearity property of the trace operator, the macro-reconstructed expectation value gathered by the classical CPU reducer is $\langle T \rangle_\Lambda = \sum_{k=1}^K c_k \text{Tr}\left(\tau_k \Lambda_k(\rho_k)\right)$. The absolute deviation from the true analytical expectation value is bounded using the triangle inequality over the classical scalar fields:
\begin{align}
    \Delta_T &= \left| \sum_{k=1}^K c_k \text{Tr}\left(\tau_k \Lambda_k(\rho_k)\right) - \sum_{k=1}^K c_k \text{Tr}\left(\tau_k \rho_{k,\text{ideal}}\right) \right| \\
    &\le \sum_{k=1}^K |c_k| \cdot \left| \text{Tr}\left(\tau_k \left[ \Lambda_k(\rho_k) - \rho_{k,\text{ideal}} \right]\right) \right|
\end{align}

By applying the Hölder inequality for matrix operators, the trace of the product is bounded by the product of the operator norm (spectral radius) and the trace class norm:
\begin{equation}
    \left| \text{Tr}\left(\tau_k \left[ \Lambda_k(\rho_k) - \rho_{k,\text{ideal}} \right]\right) \right| \le \|\tau_k\|_\infty \cdot \text{Tr}\left(\left| \Lambda_k(\rho_k) - \rho_{k,\text{ideal}} \right|\right)
\end{equation}

Recalling that the trace distance convention is defined as $\mathcal{D}(\rho, \sigma) = \frac{1}{2}\|\rho - \sigma\|_{\text{tr}} = \frac{1}{2}\text{Tr}(|\rho - \sigma|)$, we substitute this definition into the expression:
\begin{equation}
    \left| \text{Tr}\left(\tau_k \left[ \Lambda_k(\rho_k) - \rho_{k,\text{ideal}} \right]\right) \right| \le 2 \|\tau_k\|_\infty \cdot \mathcal{D}\left(\Lambda_k(\rho_k), \rho_{k,\text{ideal}}\right)
\end{equation}

Since all partitioned primitives $\tau_k$ correspond to bounded unitary operations or normalized projections (such as the standard 1-qubit and 2-qubit Hadamard Test layouts), their operator norm is strictly bounded by unity, $\|\tau_k\|_\infty \le 1$. Substituting this structural normalization back into the linear summation yields the desired upper bound.

Under the architectural factorization assumption ($\Lambda = \bigotimes_k \Lambda_k$), the noisy execution of any given sub-channel $\tau_i$ takes place within an isolated tensor block. It possesses no physical mechanism to alter or cascade into the localized quantum state transitions of a separate sub-channel $\tau_j$. The global mathematical error is thus cleanly trapped as a stable, un-entangled linear sum of isolated local variances, completing the proof.
\end{proof}

% -- Theorem 2
\subsection{The Complexity Crossover and Sampling Cost of AOD Reconstruction}
While the AOD framework contains exponential gate error propagation by fracturing monolithic operators into shallow sub-channels, it introduces a classical sampling overhead governed by the scale of the decomposition coefficients. 

\begin{theorem}[Sampling Cost of AOD Reconstruction]
Let $c = [c1, c2, \dots, c_K]^T$ be the vector of classical coefficients extracted during the operator decomposition phase. To resolve the global expectation value $\langle T \rangle$ to an absolute statistical precision error of $\epsilon$ with a confidence interval of $1-\delta$ using independent hardware estimators bounded in $[-1, 1]$, the total hardware shot budget $S_{\text{total}}$ required by the AOD commutative monoid scales as:
\begin{equation}
    S_{\text{total}} = \frac{2\|c\|_1^2}{\epsilon^2} \ln \left(\frac{2}{\delta}\right)
\end{equation}
\end{theorem}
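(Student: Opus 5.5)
The plan is to build an unbiased estimator of $\langle T\rangle=\sum_k c_k\langle\tau_k\rangle$ from independent single-shot outcomes, and then apply Hoeffding's inequality. A naive per-channel allocation wastes the $\|c\|_1$ structure, so I will use \emph{importance sampling} (the standard quasiprobability-sampling argument).

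\textbf{Estimator.} Define the probability distribution $p_k=|c_k|/\|c\|_1$ over channels. For each of the $S_{\text{total}}$ shots:
\begin{enumerate}
\item draw an index $k\sim p$;
\item execute the Hadamard-test primitive for $\tau_k$ once, obtaining an outcome $y\in\{-1,+1\}$ with $\mathbb{E}[y\mid k]=\langle\tau_k\rangle$ (this is the relation $P(0)-P(1)$ from Scenarios 1--3);
\item record $X=\|c\|_1\,\mathrm{sgn}(c_k)\,y$.
\end{enumerate}
Then $\mathbb{E}[X]=\sum_k p_k\|c\|_1\mathrm{sgn}(c_k)\langle\tau_k\rangle=\sum_k c_k\langle\tau_k\rangle=\langle T\rangle$, so the estimator is unbiased. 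Moreover each $X_s$ lies in $[-\|c\|_1,\|c\|_1]$, an interval of width $2\|c\|_1$. The shots are i.i.d.\ because the executions are independent; this mirrors the factorization assumption $\Lambda=\bigotimes_k\Lambda_k$ of Theorem 1.

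\textbf{Concentration.} Let $\hat T=\frac{1}{S}\sum_{s=1}^{S}X_s$. Hoeffding's inequality gives
\[
\Pr\bigl(|\hat T-\langle T\rangle|\ge\epsilon\bigr)\le 2\exp\!\left(-\frac{2S^2\epsilon^2}{S(2\|c\|_1)^2}\right)=2\exp\!\left(-\frac{S\epsilon^2}{2\|c\|_1^2}\right).
\]
Setting the right-hand side equal to $\delta$ and solving for $S$ yields exactly $S_{\text{total}}=\frac{2\|c\|_1^2}{\epsilon^2}\ln(2/\delta)$, as stated.

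\textbf{Remarks.} Under noise the same argument concentrates $\hat T$ around $\langle T\rangle_\Lambda$ rather than $\langle T\rangle$. Adding the bias bound of Theorem 1 then gives the split $\mathcal{E}_{\text{total}}\le\Delta_T+(\text{statistical term})$ used in the Discussion.

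\textbf{Main obstacle.} The difficulty is interpretive rather than computational: the statement writes ``scales as'' with an equality. I will present the formula as a \emph{sufficient} shot budget under the sampling scheme above. I will also note an alternative: a deterministic allocation $S_k\propto|c_k|$ with per-channel Hoeffding bounds reproduces the same $\|c\|_1^2$ dependence, up to rounding of the $S_k$.

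Care is also needed over what ``estimators bounded in $[-1,1]$'' means. It must refer to single-shot $\pm1$ outcomes, or more generally to any bounded single-shot estimator, so that Hoeffding applies at the shot level rather than to already-averaged $\widehat{\langle\tau_k\rangle}$.
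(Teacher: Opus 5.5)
Your argument is correct, but it reaches the bound through a different sampling scheme from the paper's.

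\textbf{The paper's route.} It uses a deterministic (stratified) allocation. Channel $k$ receives a fixed number $S_k$ of shots, and the estimator is $\hat T_{\text{AOD}}=\sum_k c_k\hat Y_k$. Hoeffding is applied to independent but non-identically distributed shots: each shot of channel $k$ contributes a term of range $2|c_k|/S_k$, giving the exponent $-\epsilon^2/(2\sum_k c_k^2/S_k)$. A Lagrange-multiplier step then shows that $S_k=S_{\text{total}}|c_k|/\|c\|_1$ minimizes $\sum_k c_k^2/S_k$ subject to $\sum_k S_k=S_{\text{total}}$. This collapses the sum to $\|c\|_1^2/S_{\text{total}}$, which is exactly the ``alternative'' you mention at the end.

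\textbf{Your route.} Your importance-sampling estimator randomizes the channel on every shot, so the $X_s$ are i.i.d.\ with the common range $2\|c\|_1$. A single standard Hoeffding application then suffices. No optimization is needed and there is no rounding of the $S_k$. The estimator also stays unbiased when $S_{\text{total}}<K$, since channels may legitimately receive zero shots.

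\textbf{What each buys.} The paper's route gives a fixed per-channel shot schedule, which is natural for batch submission to separate QPUs. It also shows that $S_k\propto|c_k|$ is the best deterministic allocation for this Hoeffding bound; the Discussion later cites this as ``the optimal shot allocation derived from Theorem 2.'' The price is that the $S_k$ must be rounded up to integers, costing at most $K$ extra shots. Every channel with $c_k\neq 0$ also needs at least one shot to avoid bias. Both routes yield the same sufficient budget $2\|c\|_1^2\ln(2/\delta)/\epsilon^2$.

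\textbf{Your caveat.} Your point about applying Hoeffding at the single-shot level is well placed. Read literally, the paper's sentence that $c_k\hat Y_k$ has range $2|c_k|$ would give $\sum_k(2|c_k|)^2$ with no $S_k$ in the exponent. The displayed bound only follows from the shot-level application you describe.
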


\begin{proof}
Let $\hat{Y}_k$ be the empirical estimator for the local expectation value $\langle \tau_k \rangle$ obtained by averaging $S_k$ independent hardware measurement shots on an ultra-shallow template register. Because each measurement outcome yields an eigenvalue or expectation difference bounded by the interval $[-1, 1]$, the range of each independent random variable $c_k \hat{Y}_k$ is $2|c_k|$. The global AOD estimator is reconstructed via the classical linear combination $\hat{T}_{\text{AOD}} = \sum_{k=1}^K c_k \hat{Y}_k$.

By applying the Hoeffding inequality for independent bounded random variables, the probability that the macro-reconstructed estimator deviates from its true phase-averaged mean by more than $\epsilon$ is strictly bounded by:
%\begin{equation}
%    P\left( \left| \hat{T}_{\text{AOD}} - \mathbb{E}[\hat{T}_{\text{AOD}}] \right| \ge \epsilon \right) \le 2 \exp \left( - \frac{2\epsilon^2}{\sum_{k=1}^K \frac{(2|c_k|)^2}{S_k}} \right) = 2 \exp \left( - \frac{\epsilon^2}{2 \sum_{k=1}^K \frac{c_k^2}{S_k}} \right)
%\end{equation}
\begin{equation}
\begin{aligned}
    P\left( \left| \hat{T}_{\text{AOD}} - \mathbb{E}[\hat{T}_{\text{AOD}}] \right| \ge \epsilon \right) &\le 2 \exp \left( - \frac{2\epsilon^2}{\sum_{k=1}^K \frac{(2|c_k|)^2}{S_k}} \right) \\
    &= 2 \exp \left( - \frac{\epsilon^2}{2 \sum_{k=1}^K \frac{c_k^2}{S_k}} \right)
\end{aligned}
\end{equation}

To minimize the global failure probability given a fixed total shot budget $S_{\text{total}} = \sum S_k$, we apply the method of Lagrange multipliers to minimize the variance expression in the denominator, yielding an optimal shot allocation strategy where $S_k = S_{\text{total}} \frac{|c_k|}{\|c\|_1}$. Substituting this optimal allocation back into the sum simplifies the denominator constraint:
\begin{equation}
    \sum_{k=1}^K \frac{c_k^2}{S_k} = \sum_{k=1}^K \frac{c_k^2}{S_{\text{total}} \frac{|c_k|}{\|c\|_1}} = \frac{\|c\|_1}{S_{\text{total}}} \sum_{k=1}^K |c_k| = \frac{\|c\|_1^2}{S_{\text{total}}}
\end{equation}

Substituting this reduction back into the exponential bound simplifies the expression to:
\begin{equation}
    2 \exp \left( - \frac{\epsilon^2 S_{\text{total}}}{2\|c\|_1^2} \right)
\end{equation}
Setting this upper bound equal to our maximum allowable significance threshold $\delta$ and isolating $S_{\text{total}}$ directly yields the precise sampling cost equation, resolving the structural coefficient.
\end{proof}

\begin{corollary}[Asymptotic Efficiency Crossover Boundaries]
Let $S_{\text{mono}}(D)$ be the hardware shot budget required to resolve an equivalent monolithic quantum circuit of depth $D$ to identical precision $\epsilon$ under a uniform depolarizing rate $\epsilon_g$. In the asymptotic limit where circuit depth dramatically exceeds the physical hardware coherence threshold ($D \gg D_{\max} \approx T_2/t_{\text{gate}}$), complete entropic depolarization drives the state vector toward a maximally mixed state, causing the monolithic shot budget to diverge:
\begin{equation}
    \lim_{D \to \infty} S_{\text{mono}}(D) = \lim_{D \to \infty} \frac{1}{\epsilon^2 (1 - \epsilon_g)^{2nD}} = \infty
\end{equation}
Consequently, for any finite operator footprint characterized by a bounded norm $\|c\|_1 < \infty$, there exists a critical circuit depth threshold $D_{\text{crit}}$ past which the classical sampling overhead of AOD can become more favorable in the modeled resource regime.
\end{corollary}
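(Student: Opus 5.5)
The plan is to model the monolithic cost directly from the depolarizing model of Theorem~1, then compare it with the $D$-independent AOD budget from Theorem~2.

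\textbf{Step 1 (monolithic signal contraction).} Apply the depolarizing representation used in Theorem~1 to a single register of width $n$ and depth $D$. The noisy expectation is
\[
\langle T\rangle_\Lambda=(1-\lambda)\langle T\rangle_{\rm ideal}+\lambda\,\mathrm{Tr}(T)/2^n,
\qquad 1-\lambda=(1-\epsilon_g)^{nD}.
\]
Subtracting the known offset (zero for traceless Hadamard-test observables) and rescaling by $(1-\epsilon_g)^{-nD}$ gives an unbiased estimator of the ideal value. The per-shot outcomes still lie in $[-1,1]$, so the rescaled estimator has range, and hence standard deviation, amplified by $(1-\epsilon_g)^{-nD}$.

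\textbf{Step 2 (monolithic shot budget).} Apply the Hoeffding argument from the proof of Theorem~2 with a single term whose coefficient is $(1-\epsilon_g)^{-nD}$. This gives
\[
S_{\rm mono}(D)=\Theta\!\left(\frac{\ln(2/\delta)}{\epsilon^2(1-\epsilon_g)^{2nD}}\right).
\]
Up to the constant and logarithmic factor absorbed in the statement, this is the displayed expression. Since $0<\epsilon_g<1$ and $n\ge 1$, we have $(1-\epsilon_g)^{2nD}\to 0$ as $D\to\infty$, so the limit is $\infty$.

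\textbf{Step 3 (crossover).} By Theorem~2, the AOD budget $S_{\rm AOD}=2\|c\|_1^2\ln(2/\delta)/\epsilon^2$ does not depend on the monolithic depth $D$. It depends only on the fixed decomposition and the sub-circuit depths $D_k\ll D_{\max}$. To account for residual local noise, each local estimator can be rescaled by $(1-\lambda_k)^{-1}$. This replaces $\|c\|_1$ with $\sum_k |c_k|/(1-\lambda_k)$, which is still a finite constant. Define
\[
D_{\rm crit}=\min\{D: S_{\rm mono}(D)\ge S_{\rm AOD}\}.
\]
Because $S_{\rm mono}$ is monotone increasing in $D$ and diverges, solving the inequality explicitly gives
\[
D_{\rm crit}=\left\lceil \frac{\ln\big(C/\|c\|_1^2\big)}{2n\,|\ln(1-\epsilon_g)|}\right\rceil,
\]
where $C$ collects the constants. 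For every $D\ge D_{\rm crit}$, AOD needs no more shots than the monolithic circuit.

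\textbf{Main obstacle.} The hard part is Step~1: making ``resolve to identical precision'' rigorous for the monolithic circuit. This requires fixing the estimator (the unbiased rescaled one) and justifying the $(1-\epsilon_g)^{-2nD}$ variance inflation as necessary and not just sufficient. I would first argue sufficiency through Hoeffding, then argue necessity with a variance lower bound. The variance of a $\pm1$ outcome with mean $(1-\lambda)\mu$ is at least $1-(1-\lambda)^2\ge 1/2$ for large $D$, so the rescaled variance is at least $\tfrac12(1-\epsilon_g)^{-2nD}$. This confirms the divergence rate.
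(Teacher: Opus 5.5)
The paper gives no separate proof of this corollary. It simply posits $S_{\text{mono}}(D)\propto \epsilon^{-2}(1-\epsilon_g)^{-2nD}$, reads off the divergence, and leaves the crossover as a comparison with the $D$-independent budget of Theorem 2. Your argument follows the same logic but supplies what the paper skips. You derive the monolithic budget from Hoeffding applied to the debiased estimator; the displayed formula is exactly this with the $2\ln(2/\delta)$ factor dropped. You also rescale the local AOD estimators by $(1-\lambda_k)^{-1}$ so both budgets target the ideal value. That step matters, because Theorem 2 as stated only concentrates $\hat T_{\text{AOD}}$ around its noisy mean. Defining $D_{\text{crit}}$ as the least $D$ with $S_{\text{mono}}(D)\ge S_{\text{AOD}}$ and using monotone divergence is a valid existence argument.

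There are two things to fix.

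\textbf{The explicit $D_{\text{crit}}$ formula has the logarithm inverted.} $S_{\text{mono}}(D)\ge S_{\text{AOD}}$ is equivalent to $(1-\epsilon_g)^{-2nD}\ge C\|c\|_1^2$, so
\[
D_{\text{crit}}=\max\Bigl\{0,\,\Bigl\lceil \frac{\ln\bigl(C\|c\|_1^2\bigr)}{2n\,|\ln(1-\epsilon_g)|}\Bigr\rceil\Bigr\}.
\]
This grows with $\|c\|_1$, as it must, since a larger 1-norm tax pushes the crossover deeper. If both budgets come from the same Hoeffding bound, then $C=1$ and $D_{\text{crit}}\approx \ln\|c\|_1/(n|\ln(1-\epsilon_g)|)$. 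If you adopt your debiasing fix, $\|c\|_1$ here should be the rescaled norm $\sum_k|c_k|/(1-\lambda_k)$.

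\textbf{The lower bound on $S_{\text{mono}}$ needs a different tool.} The crossover genuinely needs a lower bound on the monolithic budget, since Hoeffding only shows sufficiency. Your variance argument does not quite supply it, for two reasons. It concerns only the rescaled sample mean. And a variance lower bound alone does not force a large failure probability without an anti-concentration step. An estimator-independent route is a two-point test. Take ideal values $\mu$ and $\mu+2\epsilon$; they induce per-shot Bernoulli outcome distributions whose KL divergence is $O\bigl(\epsilon^2(1-\epsilon_g)^{2nD}\bigr)$ once $(1-\epsilon_g)^{nD}$ is small. By Bretagnolle--Huber, any estimator achieving precision $\epsilon$ with confidence $1-\delta$ (for $\delta<1/4$) then needs $\Omega\bigl(\ln(1/(4\delta))/(\epsilon^2(1-\epsilon_g)^{2nD})\bigr)$ shots. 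That gives the claimed divergence rate rigorously.
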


\bibliographystyle{plain} % Or 'unsrt', 'alpha', 'ieeetr'
\bibliography{paper_references}

\end{document}